\documentclass{article}
\pdfoutput=1
\usepackage{graphicx}
\usepackage{amsmath,amssymb,verbatim,amsthm}
\newtheorem{theorem}{Theorem}[section]

\newtheorem{corollary}[theorem]{Corollary}

\newtheorem{remark}[theorem]{Remark}

\newtheorem{assumption}[theorem]{Assumption}

\numberwithin{equation}{section}

\usepackage{color}
\usepackage{xcolor}

\textwidth   5.8in \textheight  9in \oddsidemargin  0.3in
\evensidemargin 0.3in \topmargin 0in

\newcommand{\bbE}{\mathbb{E}}
\newcommand{\bbP}{\mathbb{P}}
\newcommand{\bbR}{\mathbb{R}}

\newcommand{\cG}{\mathcal{G}}
\newcommand{\cV}{\mathcal{V}}
\newcommand{\cB}{\mathcal{B}}

\newcommand{\testfunction}{\ensuremath{\mathcal{B}_b}}

\newcommand{\cO}{\mathcal{O}}
\newcommand{\cL}{\mathcal{L}}

\newcounter{hypA}
\newenvironment{hypA}{\refstepcounter{hypA}\begin{itemize}
  \item[({\bf A\arabic{hypA}})]}{\end{itemize}}


\title{Forward and Inverse Uncertainty Quantification using Multilevel Monte Carlo
Algorithms for 
an Elliptic Nonlocal Equation}

\author{
A. Jasra, \thanks{
Department of Statistics \& Applied Probability National University of Singapore Singapore, Singapore
  }
  \and
K.J.H. Law, \thanks{
  Computer Science and Mathematics Division, Oak Ridge National Laboratory, 
  Oak Ridge, TN, USA, 37831
  }
\and
Y. Zhou \thanks{
  Department of Statistics \& Applied Probability National University of Singapore Singapore, Singapore
  }
}

\begin{document}
\maketitle

\begin{abstract}
This paper considers uncertainty quantification for an elliptic nonlocal equation.  In particular, it is assumed that the parameters which define the kernel in the nonlocal operator are uncertain and a priori distributed according to a probability measure.  It is shown that the induced probability measure on some quantities of interest arising from functionals of the solution to the equation with random inputs is well-defined; as is the posterior distribution on parameters given observations.  
As the elliptic nonlocal equation cannot be solved approximate posteriors are constructed.
The multilevel Monte Carlo (MLMC) and 
multilevel sequential Monte Carlo (MLSMC) sampling algorithms are used for a priori and a posteriori 
estimation, respectively, of quantities of interest.  These algorithms 
reduce the amount of work to estimate posterior expectations, for a given level
of error, relative to Monte Carlo and i.i.d.~sampling from the posterior
at a given level of approximation of the solution of the elliptic nonlocal equation.\\
   
   \bigskip
   
   \noindent \textbf{Key words}: Uncertainty quantification, multilevel Monte Carlo, sequential Monte Carlo, 
nonlocal equations, Bayesian inverse problem \\
   \noindent \textbf{AMS subject classification}: 82C80, 60K35.
\end{abstract}

\section{Introduction}
\label{sec:intro}


{\it Anomalous diffusion}, where the associated underlying stochastic process 
is not Brownian motion, 
has recently attracted considerable attention 
\cite{bakunin2008turbulence}. 
This case is interesting because 
there may be long-range correlations, among other reasons.  
Anomalous superdiffusion can be be related to fractional Laplacian operators
and/or so-called {\it nonlocal} operators \cite{du2012analysis}, defined point-wise  
by their operation on a function $u:\bbR^d \rightarrow \bbR$ as 
\begin{equation}
\cL(u)(x) := \int_{\bbR^d} B(x,x') [u(x') - u(x)]  dx'.  
\label{eq:ell}
\end{equation}
The fractional Laplacian is actually a special case of this equation.
The present work will focus on these operators and in particular the associated stationary
equation, analogous to the local elliptic equation.  It should be noted that these nonlocal operators
and the associated equations can be applied not only to problems of anomalous diffusion, 
but to a wide range of phenomena, including peridynamic models of continuum mechanics which
allow crack nucleation and propagation \cite{silling2003deformation, du2012analysis}.

In this work it will be assumed that the kernel $B$ appearing above is 
parametrized by a (possibly infinite-dimensional) parameter $\lambda \in E$; $\lambda$ is assumed random.  Furthermore, some
partial noisy observations of the solution $u$ will be available.  In a probabilistic framework,
a prior distribution  is placed on $\lambda \sim \mu$ and the posterior distribution 
$\lambda|y \sim \eta$ results from conditioning on the observations.  
The joint distribution can often trivially be derived and evaluated in closed form for a given pair $(\lambda,y)$,
as $\bbP(\lambda,y) = \bbP(y|\lambda) \mu(\lambda)$.
Hence, the posterior for a given observed value of $y$ can be evaluated, up to a normalizing constant.
One aims to approximate {\it quantities of interest} $q=\int_E Q(\lambda) \nu(d\lambda)$
for some $Q:E\rightarrow \bbR$, where $\nu=\mu$ for the forward problem and $\nu=\eta$ for the inverse problem.
The likelihood $\bbP(y|\lambda)$ is often concentrated in a small, possibly nonlinear, subspace of $E$.
This posterior concentration generically precludes the naive application of standard 
forward approximation algorithms independently to the numerator, $\int_E Q(\lambda) \bbP(y|\lambda) \mu(d\lambda)$, 
and denominator, $\int_E \bbP(y|\lambda) \mu(d\lambda)$.  
It is noted that, typically, 
\eqref{eq:ell} will have to be approximated numerically, 
and this will lead to an approximate posterior density.

Monte Carlo (MC) and Sequential Monte Carlo (SMC) methods are amongst the most widely used computational techniques in statistics, engineering, physics, finance and many other disciplines.  In particular, if i.i.d.~samples $\lambda_i\sim \mu$ may be obtained, the MC sampler simply iterates this $N$ times
and approximates $(1/N) \sum_{i=1}^N Q(\lambda_i) \approx \int_E Q(\lambda)\mu(d\lambda)$.  
SMC samplers \cite{delm:06b} are designed to approximate a sequence $\{ \eta_l \}_{l \geq 0}$ of probability distributions on a common space, whose densities are only known up-to a normalising constant. The method uses $N\geq 1$ samples (or particles) that are generated in parallel, and are propagated with importance sampling (often) via Markov chain Monte Carlo (MCMC) and resampling methods. Several convergence results, as $N$ grows, have been proved (see  e.g.~\cite{delm:04}).

For problems which must first be approximated at finite resolution, as is the case in this article,
and must subsequently be sampled from using MC-based methods, 
a multilevel (ML) framework may be used. This can potentially reduce
the cost to obtain a 
given level of mean square error (MSE) \cite{heinrich2001multilevel, gile:08, giles2015multilevel},
relative to performing i.i.d.~sampling from the approximate
posterior at a given (high) resolution.
A telescopic sum of successively refined approximation
increments are estimated instead of a single highly resolved approximation. 
The convergence of the refined approximation increments allows one to balance the cost
between levels, and optimally obtain a cost $\cO(\varepsilon^{-2})$ for MSE
$\cO(\varepsilon^2)$. This is shown in the context of the models in this article.
Specifically, it is shown that the cost of MLMC, to provide a MSE of $\mathcal{O}(\varepsilon^2)$, 
is less than i.i.d.~sampling from the most accurate prior approximation.

SMC within the ML framework has been primarily developed in \cite{ourmlsmc,mlsmcnc,jasra2015multilevel}. 
These methodologies, some of which consider a similar context to this article, have been introduced where
the ML approach is thought to be beneficial, but i.i.d.~sampling is not possible. Indeed, one often has to resort to advanced MCMC or SMC methods to implement the
ML identity; see for instance \cite{hoang2013complexity} for MCMC. SMC for nonlocal problems however, is a very sensible framework to implement the ML identity, as it will approximate a sequence of related probabilities of increasing complexity; in some cases, exact simulation from couples of the probabilities is not possible.
As a result, SMC is the computational framework which we will primarily pursue in this article.
It is shown that the cost of MLSMC, to provide a MSE of $\mathcal{O}(\varepsilon^2)$, 
is less than i.i.d.~sampling from the most accurate posterior approximation.
 It is noted, however, that some developments both with regards to
theoretical analysis and implentation of the algorithm, must be carefully considered in order to successfully use MLSMC for nonlocal models.

This article is structured as follows. In section \ref{sec:theo} the nonlocal elliptic equations are given, along with the Bayesian inverse problem, and well-posedness of both are established. In particular, it is shown that the 
posterior has a well-defined Radon-Nikodym derivative w.r.t.~
 prior measure. 
In section \ref{sec:expec} we consider how one can approximate expectations w.r.t.~the prior and the posterior,
specifically using MLMC and MLSMC methods. Our complexity results are given also.
Section \ref{sec:num} provides some numerical implementations of MLMC 
and MLSMC for the prior, and posterior, respectively.

\section{Nonlocal elliptic equations}
\label{sec:theo}

\subsection{Setup}

Consider the following equation
\begin{eqnarray}
\cL(u)(x) &=& b(x), \quad {\rm for ~} x \in \Omega \subset \bbR^d \\
u(x)  &=& 0, \quad {\rm for ~} x \in \Gamma,
\label{eq:original}
\end{eqnarray}
where $\cL$ is given by \eqref{eq:ell}, 
the domain $\Omega$ is simply connected, and its ``boundary'' $\Gamma$ is sufficiently regular 
and {\it nonlocal}, in the sense that it has non-zero volume in $\bbR^d$,


Under appropriate conditions on $B$, $\cL^{-1} : H^{-s}(\Omega\cup\Gamma) \rightarrow H^{s}_c(\Omega\cup\Gamma)$, 
for $s \in [0,1)$, with $s=1$ only for the limiting local version in which 
$B(x,x') = (I - \partial^2/\partial x^2) \delta(x'-x)$ (or a similar uniformly elliptic form) \cite{du2012analysis}.  
Following \cite{du2012analysis} 
$H^s_c=\{u \in H^s; u|_\Gamma =0, \int_{\bbR^d} u(x) dx = 0\}$ denotes the volume constrained space of functions, 
and the fractional Sobolev space $H^s$ is defined as follows, for $s\in(0,1)$,
\begin{equation}
H^s := \{ u\in L^2(\Omega\cup \Gamma) : \|u\|_{L^2(\Omega\cup \Gamma)} + |u|_{H^s(\Omega\cup\Gamma) } <\infty\},
\label{eq:hs}
\end{equation}
where 
$$
|u|_{H^s(\Omega\cup\Gamma)} := \int_{\Omega\cup\Gamma} \int_{\Omega\cup \Gamma} \frac{(u(x)-u(y))^2}{|x-y|^{d+2s}} dydx.
$$
We define $H^0_c:=L^2_c$ for ease of notation.

For $b\in V^*(\Omega\cup\Gamma)$, the dual with respect to $L^2$
of some space $V(\Omega\cup\Gamma)$, the weak formulation of 
\eqref{eq:original} can be defined as follows.
Integrate the equation against an arbitrary test function $v \in V(\Omega\cup\Gamma)$.
Now, find $u \in V(\Omega\cup\Gamma)$ (satisfying the boundary conditions) 
such that
\begin{eqnarray}
\cB(u,v) & = F(v), \quad {\rm for ~ all ~ } v \in V(\Omega\cup\Gamma),
\label{eq:weak}
\end{eqnarray}
where 
\[
\begin{split}
\cB(u,v) & := \int_{\bbR^d} \int_{\bbR^d} [u(x') - u(x)] B(x,x') dx' v(x) dx, \\ 
F(v) & :=  \int_{\bbR^d} b(x) v(x) dx.
\end{split}
\]

\subsection{Numerical methods for forward solution}

Let $h_\ell$ denote the maximum diameter of an element of $V_\ell \subset V_{\ell+1} \subset \cdots \subset V$.
The finite approximation of \eqref{eq:weak} is stated as follows.  Identify some $u_\ell \in V_\ell$ such that 
\begin{eqnarray}
\cB(u_\ell,v_\ell) & = F(v_\ell), \quad {\rm for ~ all ~ } v_\ell \in V_\ell.
\label{eq:weakell}
\end{eqnarray}
Assuming the spaces $V_\ell$ are spanned by elements $\{\phi^k_\ell\}_{k=1}^{M_\ell}$, then one can substitute the 
ansatz $u_\ell = \sum_{k=1}^{M_\ell} u_{\ell}^k \phi^k_\ell$ into the above equation, resulting in the finite linear system
\begin{eqnarray}
\sum_{k=1}^{M_\ell} \cB(\phi_\ell^j,\phi^k_\ell)u_k & = F(\phi_\ell^j), \quad {\rm for ~ } j=1,\dots, M_\ell.
\label{eq:weakL}
\end{eqnarray}

In particular, the spaces $\{V_\ell\}$ will be comprised of discontinuous elements, so that the 
method described is a discontinuous Galerkin finite element method (FEM) \cite{chen2011continuous}.
Piecewise polynomial discontinuous element spaces are dense in $H^s_c$ for $s\in[0,1/2)$,
and are therefore {\it conforming} when 
$u\in H_c^s=V$ for $s\in[0,1/2)$, so there is no need to impose penalty terms at the boundaries as 
one must do for smoother problems in which the discontinuous elements are non-conforming.


\subsection{Forward UQ}

The following assumption will be made for simplicity
\begin{assumption} $V=L^2(\Omega \cup \Gamma)$, with norm $\|\cdot\|$ 
and inner product $\langle\cdot,\cdot\rangle$.  $B(x,x') = B(|x'-x|) \geq 0$ is continuous 
with $B(0)\geq c' > 0$, and there exist $K_1>0$ 
such that for all $x\in\Omega$
\begin{eqnarray}
\int_{\Omega \cup \Gamma} B(x,x') dx' & \leq K_1.  
\end{eqnarray}
\label{as:kernel}
\end{assumption}

As shown in Section 6 of \cite{gunzburger2010nonlocal}, this implies that
for all $u,v,b \in V$
\begin{itemize}
\item $F$ is continuous: $|F(v)| \leq \|b\| \|v\|$;
\item $\cB$ is continuous: $|\cB(u,v)| \leq 4K_1 \|u\|\|v\|$;
\item $\cB$ is coercive:  $|\cB(u,u)| \geq K_2 \|u\|^2$ for some $K_2(\cB,\Omega)>0$.  
This 
actually follows from the Poincar{\'e}-type inequality derived in Proposition 2.5 of \cite{andreu2009nonlocal}.
\end{itemize}

Hence, Lax-Milgram lemma (\cite{ciarlet2002finite}, Thm. 1.1.3) can be invoked, 
guaranteeing existence of a unique solution $u\in L^2(\Omega)$
such that 
\[
\|u\| \leq K_2^{-1} \|b\|.
\]
In other words, the map $b \mapsto u$ is continuous.  
The system \eqref{eq:weakL} inherits solvability since the bilinear is a fortiori coercive on $V_\ell$,
so that 
\[
\|u_\ell\| \leq K_2^{-1} \|b\|.
\]

Let $B_\lambda$ be a parametrization of $B$, where 
$\lambda \sim \mu_0$ and $\lambda \in E$, and either $E=\bbR^p$ or
$E\subset \bbR^p$ compact.  
Then the following theorem holds.  

\begin{theorem}[Well-posedness of forward UQ]  If Assumption \ref{as:kernel} holds almost surely for 
$\lambda\sim\mu$ and the map $u \mapsto Q$ is continuous from $L^2$ to $\bbR$, 
then the map $\lambda \mapsto Q$ is almost surely continuous.  
Hence $Q(\lambda) \in L^\infty \supset L^p$ for all $p\geq 1$, i.e. all moments exist. 
$L^p$ here denotes the space of random variables $X$ such that 
$\int_{E} |X|^p \mu(d\lambda) < \infty$.
\label{thm:fwd}
\end{theorem}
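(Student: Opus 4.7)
The plan is to factor $\lambda \mapsto Q(\lambda)$ through the solution map $\lambda \mapsto u_\lambda \in L^2$ and then invoke the assumed continuity of $u \mapsto Q$. Since Assumption~\ref{as:kernel} holds $\mu$-a.s., the Lax--Milgram argument recalled in the preceding paragraphs produces, for $\mu$-a.e.\ $\lambda$, a unique $u_\lambda \in V$ solving the weak formulation with kernel $B_\lambda$, together with the a priori estimate $\|u_\lambda\| \leq K_2(\lambda)^{-1}\|b\|$. The remaining content of the theorem therefore reduces to showing almost sure continuity of the parameter-to-solution map in $L^2$.

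The core step is a standard perturbation computation. Fix $\lambda$ in the full-measure set where the assumption holds, and let $\lambda'$ be nearby in the same set. Writing $\cB_\lambda$ for the bilinear form built from $B_\lambda$, subtraction of the two weak problems yields, for every $v\in V$,
\[
\cB_\lambda(u_\lambda - u_{\lambda'}, v) = (\cB_{\lambda'} - \cB_\lambda)(u_{\lambda'}, v).
\]
Testing with $v = u_\lambda - u_{\lambda'}$, using coercivity of $\cB_\lambda$ on the left, and bounding the right by a constant multiple of $\|B_\lambda - B_{\lambda'}\|_\infty \,\|u_{\lambda'}\|\,\|u_\lambda - u_{\lambda'}\|$ (by the same computation that delivered continuity of $\cB$ in Section~6 of \cite{gunzburger2010nonlocal}), one obtains
\[
\|u_\lambda - u_{\lambda'}\| \leq \frac{C}{K_2(\lambda)}\,\|B_\lambda - B_{\lambda'}\|_\infty\,\|u_{\lambda'}\|.
\]
Provided the parametrization $\lambda \mapsto B_\lambda$ is itself continuous (in a suitable uniform sense, which is the natural standing hypothesis on the prior model of the kernel), the right-hand side tends to zero as $\lambda' \to \lambda$, giving continuity of $\lambda \mapsto u_\lambda$. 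Composing with the assumed continuity of $u \mapsto Q$ delivers a.s.\ continuity of $\lambda \mapsto Q(\lambda)$.

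For the $L^\infty$ conclusion, the key observation is that $\|u_\lambda\| \leq K_2(\lambda)^{-1}\|b\|$ is a uniform bound in $\lambda$ as soon as $K_2(\lambda)$ admits a uniform positive lower bound on the support of $\mu$. This is automatic in the case $E \subset \bbR^p$ compact, since under continuity of $\lambda \mapsto B_\lambda$ the constant $K_2(\lambda)$ is itself continuous and strictly positive, and one can extract a minimum. The image $\{u_\lambda\}_{\lambda}$ therefore lies in a bounded subset of $L^2$, and continuity of $u \mapsto Q$ sends bounded sets to bounded sets in $\bbR$, so $Q(\lambda) \in L^\infty(\mu)$ and all $L^p$ norms follow trivially.

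The main obstacle to a completely rigorous write-up is that Assumption~\ref{as:kernel} asserts only pointwise regularity of $B$ for a fixed $\lambda$; the regularity of $\lambda \mapsto B_\lambda$ and the uniformity of $K_2(\lambda)$ are used implicitly here and should be spelled out, either as a standing hypothesis on the prior or, in the noncompact case $E = \bbR^p$, as an integrability assumption allowing one to absorb any unbounded behavior in $K_2(\lambda)^{-1}$. Once these are made explicit, the remainder of the argument is the classical Lax--Milgram stability calculation together with composition of continuous maps.
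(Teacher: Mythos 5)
Your route is genuinely different from, and more ambitious than, the paper's. The paper's own proof does not actually establish the continuity claim at all: it only records the uniform a priori bound $\|u\| \le K_2^{-1}\|b\|$ and then concludes boundedness of $Q$ under the \emph{additional} (unstated in the theorem) growth hypothesis $Q(u)\le K\|u\|$, which sidesteps the issue you tackle head on. Your perturbation identity $\cB_\lambda(u_\lambda-u_{\lambda'},v)=(\cB_{\lambda'}-\cB_\lambda)(u_{\lambda'},v)$ combined with coercivity is the right way to get continuity of $\lambda\mapsto u_\lambda$, and you are also right to flag that this needs continuity of $\lambda\mapsto B_\lambda$ and a uniform positive lower bound on $K_2(\lambda)$ --- hypotheses the paper uses implicitly (it writes $K_2$ with no $\lambda$-dependence) and which are natural on a compact $E$.

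There is, however, one genuinely false step in your final paragraph: a continuous map $Q:L^2\to\bbR$ does \emph{not} send bounded sets to bounded sets, because bounded subsets of the infinite-dimensional space $L^2$ are not relatively compact. (One can build a continuous function on the unit ball of $L^2$ that is unbounded, e.g.\ by placing disjoint bumps of height $n$ around the elements of an orthonormal sequence, which are mutually at distance $\sqrt{2}$.) So ``the image lies in a bounded set, hence $Q$ is bounded on it'' does not follow. The repair is available from your own earlier work: on compact $E$, the continuity of $\lambda\mapsto u_\lambda$ that you just proved makes $\{u_\lambda:\lambda\in E\}$ a \emph{compact} subset of $L^2$, and a continuous $Q$ is bounded on compact sets; alternatively, adopt the paper's linear growth assumption $Q(u)\le K\|u\|$ and use only the uniform bound, which is all the paper itself does. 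In the noncompact case $E=\bbR^p$ some growth or integrability assumption is unavoidable, as you correctly note.
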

\begin{proof}
Since Assumption \ref{as:kernel} holds almost surely for 
$\lambda\sim\mu$, then $\| u\| \leq u^* := K_2^{-1} \|b\|$ uniformly.
So, the quantity of interest $Q(\lambda) = \|u(\lambda)\|\in L^\infty \supset L^p$ for all $p\geq 1$. 
Therefore, for any $Q: L^2(\Omega) \rightarrow \bbR$ such that 
$Q(\lambda) := Q(u(\lambda)) \leq K \|u\|$, the result follows, since then $Q(\lambda) \leq Q^* := K u^*$ for all $\lambda$.
\end{proof}

\begin{corollary}[Well-posedness of finite approximation]
If Assumption \ref{as:kernel} holds almost surely for 
$\lambda\sim\mu$ and the map $u_\ell \mapsto Q_\ell$ is continuous from $L^2$ to $\bbR$, 
then the map through the \emph{discrete} system $\lambda \mapsto Q_\ell$ is almost surely continuous
and $Q_\ell\in L^\infty$.
\label{thm:finfwd}
\end{corollary}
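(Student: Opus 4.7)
The plan is to run the same argument as Theorem \ref{thm:fwd} but on the finite-dimensional discrete system \eqref{eq:weakL}, exploiting the fact that the bilinear form $\cB$ remains coercive when restricted to $V_\ell \subset V$. Under Assumption \ref{as:kernel} (which is assumed to hold almost surely in $\lambda\sim\mu$), the Lax--Milgram lemma applies to \eqref{eq:weakell} with the \emph{same} constants $K_1, K_2$ as at the continuous level, since continuity and coercivity are inherited by subspaces. This yields a unique $u_\ell(\lambda)\in V_\ell$ with the uniform bound $\|u_\ell(\lambda)\|\leq K_2^{-1}\|b\|$. Composing with a continuous map $u_\ell\mapsto Q_\ell$ satisfying $|Q_\ell(u_\ell)|\leq K\|u_\ell\|$ then gives $|Q_\ell(\lambda)|\leq K K_2^{-1}\|b\|=:Q_\ell^\star$ uniformly in $\lambda$, so $Q_\ell\in L^\infty\subset L^p$ for every $p\geq 1$.

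For the almost-sure continuity of $\lambda\mapsto Q_\ell$, I would work directly with the linear system \eqref{eq:weakL}. Writing $A_\ell(\lambda)$ for the stiffness matrix with entries $\cB_\lambda(\phi_\ell^j,\phi_\ell^k)$ and $F_\ell$ for the load vector $(F(\phi_\ell^j))_j$, the coordinate vector of $u_\ell(\lambda)$ satisfies $A_\ell(\lambda)\mathbf{u}_\ell(\lambda)=F_\ell$. Coercivity of $\cB_\lambda$ on $V_\ell$ forces $A_\ell(\lambda)$ to be invertible almost surely (its symmetric part is uniformly positive definite with constant $K_2$ in the basis-induced norm). Since the parametrization $\lambda\mapsto B_\lambda$ can be taken continuous in a mode strong enough to pass through the bounded bilinear integrals defining $\cB_\lambda(\phi_\ell^j,\phi_\ell^k)$ (dominated convergence using the kernel bound in Assumption \ref{as:kernel}), each entry of $A_\ell$ depends continuously on $\lambda$. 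Matrix inversion is continuous on the open set of invertible matrices, so $\lambda\mapsto \mathbf{u}_\ell(\lambda)$ is continuous, hence so is $\lambda\mapsto u_\ell(\lambda)\in V_\ell$ in the $L^2$-norm (finite-dimensional norm equivalence), and finally $\lambda\mapsto Q_\ell(\lambda)=Q_\ell(u_\ell(\lambda))$ by the assumed continuity of $Q_\ell$.

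The main obstacle is the continuity step: nothing in the statement of the corollary explicitly assumes continuity of $\lambda\mapsto B_\lambda$, so I would either (i) invoke this as a tacit regularity condition on the parametrization already used in Theorem \ref{thm:fwd}, or (ii) phrase the continuity conclusion conditional on this mild requirement, noting that without some continuous dependence of the kernel on $\lambda$ even the underlying map $\lambda\mapsto u(\lambda)$ in Theorem \ref{thm:fwd} would not be continuous. Given that, the rest is routine: the $L^\infty$ bound follows purely from the discrete Lax--Milgram estimate, which is the cheap half of the corollary and requires no continuity at all.
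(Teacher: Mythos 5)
Your proposal is correct and, for the half of the statement that the paper actually argues, it follows the same route: the paper gives no separate proof of this corollary, relying on the earlier remark that the bilinear form is ``a fortiori coercive'' on $V_\ell\subset V$, so Lax--Milgram yields $\|u_\ell\|\leq K_2^{-1}\|b\|$ with the same constants, and the uniform bound $|Q_\ell(\lambda)|\leq K K_2^{-1}\|b\|$ then gives $Q_\ell\in L^\infty$ exactly as in the proof of Theorem \ref{thm:fwd}. Where you genuinely go beyond the paper is the continuity claim: the proof of Theorem \ref{thm:fwd} asserts continuity of $\lambda\mapsto Q$ but only ever establishes the uniform bound, whereas you supply an actual argument at the discrete level --- continuity of the stiffness-matrix entries $\cB_\lambda(\phi_\ell^j,\phi_\ell^k)$ in $\lambda$ via dominated convergence under the kernel bound, invertibility from coercivity, continuity of matrix inversion, and finite-dimensional norm equivalence. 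You are also right to flag that this requires a tacit regularity assumption on $\lambda\mapsto B_\lambda$ that the paper never states; the same gap is present in the paper's own proof of Theorem \ref{thm:fwd}, so your option (i), reading it as an implicit standing hypothesis on the parametrization, is the charitable and consistent reading. In short: same skeleton as the paper for the $L^\infty$ conclusion, plus a concrete continuity argument the paper omits.
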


\subsection{Inverse UQ}

For the inverse problem, let us assume that some data is given in the form
\begin{equation}
y = \cG(\lambda) + \xi, \quad \lambda \perp \xi \sim N(0,\Sigma),
\label{eq:data}
\end{equation}
where 
\begin{equation}
\cG(\lambda) := [ \langle g_1, u(\lambda) \rangle, \dots, \langle g_M, u(\lambda) \rangle]^\top, \quad g_i\in V. 
\label{eq:g}
\end{equation}
Then the following theorem holds.

\begin{theorem}[Well-posedness of inverse UQ] 
The posterior distribution of $\lambda | y$ is well-defined and takes the form
\begin{equation}
\frac{d\eta^y}{d\mu} (\lambda) = \frac{1}{Z} \exp\{ -\frac12 | \Sigma^{-1/2}(\cG(\lambda)-y)|^2\},
\label{eq:post}
\end{equation}
with $Z = \int_E \exp\{ -\frac12 | \Sigma^{-1/2}(\cG(\lambda)-y)|^2 \} \mu(d\lambda)$.
\label{thm:inv}
\end{theorem}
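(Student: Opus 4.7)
The plan is to apply the standard Bayes' theorem for measures on a (possibly infinite-dimensional) parameter space, of the type formalised in the general Bayesian inverse problem framework (cf.~Stuart's Acta Numerica article). Setting $\Phi(\lambda;y) := \tfrac12 |\Sigma^{-1/2}(\cG(\lambda)-y)|^2$, one needs only to verify two hypotheses: (i) for each fixed $y$, the map $\lambda \mapsto \Phi(\lambda;y)$ is $\mu$-measurable, and (ii) the normalising constant $Z = \int_E e^{-\Phi(\lambda;y)}\,\mu(d\lambda)$ lies in $(0,\infty)$. Given (i) and (ii), the joint law of $(\lambda,y)$ has density proportional to $e^{-\Phi(\lambda;y)}$ with respect to $\mu \otimes \mathrm{Leb}_{\bbR^M}$, and disintegration in $y$ yields exactly the claimed Radon--Nikodym derivative in \eqref{eq:post}.

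For (i), the idea is to chain together the regularity already established in the forward section. Theorem \ref{thm:fwd} guarantees that $\lambda \mapsto u(\lambda) \in L^2$ is $\mu$-almost surely continuous, and each $\langle g_i,\cdot\rangle$ is a bounded linear functional on $V=L^2$ by Assumption \ref{as:kernel}. Composition then gives a.s.\ continuity, and hence Borel measurability, of $\cG:E\to\bbR^M$; composing once more with the smooth quadratic $z\mapsto\tfrac12|\Sigma^{-1/2}(z-y)|^2$ gives measurability of $\Phi(\cdot;y)$.

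For (ii), the upper bound $Z\le 1$ is immediate from $e^{-\Phi}\le 1$ and $\mu(E)=1$. For strict positivity, the key input is the uniform Lax--Milgram bound $\|u(\lambda)\|\le K_2^{-1}\|b\|$ that holds $\mu$-a.s.\ under Assumption \ref{as:kernel}. This gives, for each fixed $y$,
\[
|\cG(\lambda)| \;\le\; K_2^{-1}\|b\|\max_{1\le i\le M}\|g_i\|, \qquad \mu\text{-a.s.},
\]
so that $\Phi(\lambda;y) \le C(y) < \infty$ uniformly in $\lambda$ (on a set of full $\mu$-measure). Consequently $e^{-\Phi(\lambda;y)}\ge e^{-C(y)}>0$ a.s., and integration against $\mu$ yields $Z\ge e^{-C(y)}>0$, as required.

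The only nontrivial step, in the end, is the invocation of Bayes' theorem itself in the generality required to cover the case $E$ infinite-dimensional (where $\mu$ need not admit a Lebesgue density). This is the step I expect to cite rather than rederive: once (i) and (ii) are in hand, the result is a direct application of the general Bayes-for-measures theorem, whose hypotheses reduce precisely to measurability of $\Phi$ and $Z\in(0,\infty)$. In the simpler case $E\subset\bbR^p$ compact, one can alternatively argue by finite-dimensional conditioning directly.
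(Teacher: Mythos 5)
Your proposal is correct and follows essentially the same route as the paper: the paper derives the joint density of $(\lambda,y)$ from the independence of $\lambda$ and $\xi$ via a Jacobian-one change of variables and then normalises in $y$, which is the finite-dimensional instance of the Bayes-for-measures/disintegration argument you cite, and it establishes $Z>0$ by exactly the same uniform bound on $\cG$ inherited from the forward well-posedness (Theorem \ref{thm:fwd}). Your write-up is somewhat more explicit about measurability and the infinite-dimensional case, but there is no substantive difference in the argument.
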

\begin{proof}
The form of the posterior is obtained by a change of variables to $\{\lambda, \xi = y - \cG(\lambda)$\},
which are independent by assumption.  Note the change of variables has Jacobian 1.  
So, changing variables back, this also gives the joint density.  
The posterior is obtained by normalizing for the observed value of $y$.
The form of the observation operator guarantees $Z > \exp\{ - |\Sigma^{-1}| (|\cG^*|^2 + |y|^2)  \}$,
where $\cG^* = (\cG_1^*,\cG_2^*, \dots, \cG_M^*)$ and $\cG_i^*$ is defined as in the proof of Theorem \ref{thm:fwd},
and $|\Sigma^{-1}|$ is the matrix norm. 
\end{proof}

Now define $\cG_\ell(\lambda) := [ \langle g_1, u_\ell(\lambda) \rangle, \dots, \langle g_M, u_\ell(\lambda) \rangle], \quad g_i\in L^2(\Omega).$

\begin{corollary}[Well-posedness of inverse UQ for finite problem] 
The finite approximation of the posterior distribution of $\lambda | y$ is well-defined and takes the form
\begin{equation}
\frac{d\eta^y_\ell}{d\mu} (\lambda) = \frac{1}{Z} \exp\{ -\frac12 | \Sigma^{-1/2}(\cG_\ell(\lambda)-y)|^2\},
\label{eq:finpost}
\end{equation}
with $Z_\ell = \int_E \exp\{ -\frac12 | \Sigma^{-1/2}(\cG_\ell(\lambda)-y)|^2 \} \mu(d\lambda)$.
\label{cor:fininv}
\end{corollary}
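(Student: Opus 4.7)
The plan is to mirror the proof of Theorem \ref{thm:inv} essentially verbatim, substituting $\cG_\ell$ for $\cG$ and $u_\ell$ for $u$, and invoking Corollary \ref{thm:finfwd} in place of Theorem \ref{thm:fwd} wherever uniform boundedness of the forward map is needed. Since the observation noise $\xi \sim N(0,\Sigma)$ is independent of $\lambda$, the joint law of $(\lambda,y)$ under the finite-level model is determined by the same relation $y = \cG_\ell(\lambda) + \xi$. I would first verify that $\lambda \mapsto \cG_\ell(\lambda)$ is almost surely continuous and bounded: by Corollary \ref{thm:finfwd} the map $\lambda \mapsto u_\ell(\lambda)$ is a.s.\ continuous into $L^2(\Omega\cup\Gamma)$ and satisfies $\|u_\ell(\lambda)\| \leq K_2^{-1}\|b\|$ uniformly, whence $|\langle g_i, u_\ell(\lambda)\rangle| \leq \|g_i\|\, K_2^{-1}\|b\|$, giving that each component of $\cG_\ell$ is bounded (and, in particular, measurable).

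Next I would carry out the change of variables from $(\lambda,y)$ to $(\lambda,\xi) = (\lambda, y - \cG_\ell(\lambda))$. This map has Jacobian equal to $1$, and in the new coordinates $\lambda \perp \xi$ by construction, so the joint density factors as $\mu(d\lambda) \otimes N(0,\Sigma)(d\xi)$. Pushing forward back to $(\lambda,y)$ coordinates yields the joint density
\[
\bbP(d\lambda, dy) \;\propto\; \exp\!\left\{-\tfrac{1}{2}|\Sigma^{-1/2}(\cG_\ell(\lambda) - y)|^2\right\}\,\mu(d\lambda)\, dy.
\]
Conditioning on the observed value of $y$ produces \eqref{eq:finpost}, with normalizing constant $Z_\ell$ as stated.

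Finally, I would verify that $Z_\ell \in (0,\infty)$, so that the posterior is actually well-defined as a probability measure. Boundedness above by $1$ is immediate from $\exp(-\tfrac{1}{2}|\cdot|^2) \leq 1$ and $\mu$ being a probability measure. For the strict positive lower bound I would argue as in Theorem \ref{thm:inv}: using $|\cG_\ell(\lambda)| \leq \cG_\ell^\star$ uniformly in $\lambda$ (from the bound on $u_\ell$ above) together with the elementary inequality $|a-b|^2 \leq 2|a|^2 + 2|b|^2$, one gets a uniform-in-$\lambda$ lower bound on the integrand of $Z_\ell$ of the form $\exp\{-|\Sigma^{-1}|((\cG_\ell^\star)^2 + |y|^2)\}>0$, hence $Z_\ell>0$.

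There is no real obstacle here: the argument is structurally identical to the continuous case, and the only ingredient specific to the finite approximation is the uniform $L^2$-bound $\|u_\ell\| \leq K_2^{-1}\|b\|$ from coercivity of $\cB$ on $V_\ell$, which was already established in Corollary \ref{thm:finfwd}. The mildest subtlety worth flagging is measurability of $\cG_\ell$ in $\lambda$ (needed so that $Z_\ell$ is a well-defined integral), but this is inherited directly from the a.s.\ continuity statement in Corollary \ref{thm:finfwd} together with continuity of the linear functionals $\langle g_i,\cdot\rangle$.
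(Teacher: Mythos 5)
Your proposal is correct and matches the paper's intent exactly: the paper states this corollary without a separate proof, treating it as an immediate consequence of the argument for Theorem \ref{thm:inv} with $\cG_\ell$ in place of $\cG$ and the uniform bound $\|u_\ell\|\leq K_2^{-1}\|b\|$ from Corollary \ref{thm:finfwd} supplying the lower bound on $Z_\ell$. Your additional remarks on measurability and the finiteness of $Z_\ell$ are harmless elaborations of the same route, not a different approach.
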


\begin{remark}
The forcing may also be taken as uncertain, although the uniformity will require 
it to be defined on a compact space.  The probability space $E$ will be taken as compact for simplicity,
as it is then easy to verify Assumption \ref{as:kernel}. 
\end{remark}

\section{Approximation of expectations}
\label{sec:expec}

The objective here is to approximate expectations of some functional $\varphi:E\rightarrow \bbR$
with respect to a probability measure $\eta$ ($\mu$ or $\eta^y$), denoted 
$\bbE_\eta(\varphi).$ 
The solution $u$ of \eqref{eq:original} above
must be approximated by some $u_\ell$,
with a degree of
accuracy which depends on $\ell$.  
Indeed 
there exists a hierarchy of levels $\ell = 0, \dots ,L$
(where $L$ may be arbitrarily large) of increasing accuracy and increasing cost.   
For the inverse problem this manifests in a hierarchy of target probability measures 
$\{\eta_\ell\}_{\ell=0}^L$ via the approximation of \eqref{eq:post}.
For the forward problem $\eta_\ell=\eta$ for all $\ell$, 
but it will be assumed that the function $\varphi$ requires evaluation of
the solution of \eqref{eq:original}, as in Theorem \ref{thm:fwd},
and the corresponding approximations will be denoted by $\{\varphi_\ell\}_{\ell=0}^L$.
One may then compute the estimator 
$\hat{Y}_L^N = \frac1N \sum_{n=1}^N \varphi_L(\lambda_L^n), \lambda_L^n \sim \eta_L$, 
where for the forward problem it may be that $\eta_L\equiv\eta$ for all $L$.
The mean square error (MSE) is given by
\begin{equation}
\bbE \left | \bbE_{\eta}[\varphi(\lambda)] - \hat{Y}_L^N \right |^2 
=  \underbrace{\bbE\left\{ \bbE_{\eta_L}[\varphi_L(\lambda)] - \hat{Y}_L^N \right\}^2}_{\rm variance}  
+ \underbrace{\{\bbE_{\eta_L} [\varphi_L(\lambda)] - \bbE_{\eta} [\varphi(\lambda)]\}^2}_{\rm bias}\  .
\label{eq:mseml}
\end{equation}
Now assume 
there is some discretization level, say of diameter $h_L$, which gives rise to an error estimate on the output of 
size $\cO(h_L^\alpha)$, for example arising from the deterministic numerical approximation of a spatio-temporal
problem.  
This also translates to a number of degrees of freedom proportional to $h_L^{-d}$, where $d$ is the spatio-temporal dimension.
Now the complexity of typical forward solves will range from a dot product $\cO(h_L^{-d})$ (linear) 
to a full Gaussian elimination $\cO(h_L^{-3d})$ (cubic).  In this problem 
one aims to find $h_L^\alpha =\cO( \varepsilon)$, so one would find the cost controlled by $\cO(\varepsilon^{-\zeta/\alpha})$,
for $\zeta \in (d, 3d)$.  
If one can obtain independent, identically distributed samples $u^n_L$, then the necessary number of samples
to obtain a variance of size $\cO(\varepsilon^2)$ is given by $N = \cO(\varepsilon^{-2})$.
The total cost to obtain a mean-square error tolerance of $\cO(\varepsilon^2)$ is therefore $\cO(\varepsilon^{-2 - \zeta/\alpha})$.

\subsection{Multilevel Monte Carlo}
\label{sec:mlmc}

For the forward UQ problem, in which one can sample {\it directly} from $\eta_\ell$, one
very popular methodology for improving the efficiency of solution to such problems is the 
multilevel Monte Carlo (MLMC) method \cite{heinrich2001multilevel, gile:08}.
Indeed there has been an explosion of recent activity \cite{giles2015multilevel} since its introduction in \cite{gile:08}.
In this methodology the simple estimator $\hat{Y}_L^N$ above for a given desired $L$ is replaced
by a telescopic sum of unbiased  increment estimators 
$Y^{N_\ell}_\ell =  \sum_{i=1}^{N_\ell} \{ \varphi_\ell(\lambda_\ell^{(i)})-\varphi_{\ell-1}(\lambda_{\ell-1}^{(i)})\}N_\ell^{-1}$
where $\{\lambda_{\ell-1}^{(i)},\lambda_\ell^{(i)}\}$ are i.i.d.\@ samples,
with marginal laws $\eta_{\ell-1}$, $\eta_\ell$, respectively, carefully constructed on a joint
probability space.  This is repeated independently for $0\le \ell\le  L$.
The overall multilevel estimator will be 
\begin{equation}
\label{eq:multi}
\hat{Y}_{L,{\rm Multi}} = \sum_{\ell=0}^{L} Y^{N_\ell}_\ell\, ,
\end{equation}
under the convention that $g(\lambda_{-1}^{(i)})=0$.
A simple error analysis shows that 
the mean squared error (MSE) in this case is given by
\begin{eqnarray}
\nonumber
\bbE \{ \hat{Y}_{L,{\rm Multi}} - \bbE_{\eta} [\varphi(\lambda)] \}^2 & =  
\underbrace{ \sum_{\ell=0}^L \bbE \left \{ Y^{N_\ell}_{\ell} - [\bbE_{\eta_{\ell}}{\varphi_\ell}(\lambda) - \bbE_{\eta_{\ell-1}}{\varphi_{\ell-1}}(\lambda)] \right\}^2}_{\rm variance} 
\\ 
& + \underbrace{\{\bbE_{\eta_L} [{\varphi_L}(\lambda)] - \bbE_{\eta} [\varphi(\lambda)]\}^2}_{\rm bias}\  .
\label{eq:msemls}
\end{eqnarray}
Notice that the bias is given by the {\it finest} level, whilst the variance is decomposed into a sum of variances of the {\it increments}.
The variance of the $\ell^{th}$ increment estimator has the form $\cV_\ell N_\ell^{-1}$, where 
the terms $\cV_\ell=\bbE|\varphi_\ell(\lambda_\ell) - \varphi_{\ell-1}(\lambda_{\ell-1})|^2$ decay, 
following from refinement of the approximation of $\eta$ and/or $\varphi$. 
One can therefore balance the variance at a given level $\cV_\ell$ with the number of samples $N_\ell$.
As the level increases, the corresponding cost increases, but the variance {\it decreases}, allowing fewer samples to achieve a
given variance.  This can be optimized, and results in a total cost of $\cO(\varepsilon^{- \max\{2,\zeta/\alpha\}})$, in the optimal case.

To be explicit, denote by $Q_\ell := Q(u_\ell(\lambda_\ell),\lambda_\ell)$ the level $\ell$ approximation of the 
quantity of interest $Q$.
Introduce the following assumptions
\begin{hypA}
\label{hyp:D}
There exist $\alpha, \beta, \zeta>0$, 
and a $C>0$ such that 
\begin{equation}\label{eq:mlratesvan}
\begin{cases}
|\bbE(Q_L - Q_\infty) | & \leq C h_L^\alpha ;\\
\bbE|Q_\ell - Q_{\ell-1} |^2 & \leq C h_\ell^\beta ;\\              
{\rm C} (Q_{\ell}) & \leq C h_\ell^{-\zeta},
\end{cases}
\end{equation}
where ${\rm C} (Q_{\ell})$ denotes the cost to evaluate $Q_{\ell}$.
\end{hypA}

We have the following classical MLMC Theorem \cite{giles2015multilevel}

\begin{theorem}\label{theo:mlmc}
Assume (A\ref{hyp:D})
and $\rm{max}\{\beta,\zeta\} \leq 2\alpha$. 
Then for any $\varepsilon>0$, there exist $L,\{N_\ell\}_{\ell=0}^L$
and $C>0$ such that 
\begin{equation}
\mathbb{E}\Big[\Big(\hat{Y}_{L,{\rm Multi}} - \mathbb{E}_{\eta}[Q]\Big)^2\Big] \leq C \varepsilon^2,
\label{eq:mlmse}
\end{equation}
for the following cost 
\begin{equation}\label{eq:mlCostsvan}
{\rm COST} \leq C 
\begin{cases}
\varepsilon^{-2}, & \text{if} \quad \beta > \zeta, \\ 
\varepsilon^{-2} |\log(\varepsilon)|^2, & \text{if} \quad \beta = \zeta, \\ 
\varepsilon^{-\left( 2 + \frac{\zeta-\beta}{\alpha} \right)}, & \text{if} \quad \beta < \zeta. 
\end{cases}
\end{equation}
\end{theorem}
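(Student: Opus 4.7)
The plan is to decompose the MSE as in \eqref{eq:msemls} into a squared bias and a variance contribution, allocate each a budget of $\varepsilon^2/2$, and then choose $L$ to control the bias and $\{N_\ell\}$ to meet the variance budget at minimum cost. Writing $h_\ell = h_0 M^{-\ell}$ for some fixed $M>1$, the first bound in (A\ref{hyp:D}) gives $|\bbE(Q_L - Q_\infty)|^2 \leq C^2 h_L^{2\alpha}$, so taking $L := \lceil \alpha^{-1}\log_M(c\varepsilon^{-1})\rceil$ for a suitable constant $c$ suffices. In particular $L = \cO(\log\varepsilon^{-1})$ and $h_L^{-1} = \cO(\varepsilon^{-1/\alpha})$, which will be used repeatedly.

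Next I turn to the variance. Since the samples used on distinct levels are independent and each increment is an unbiased estimator of $\bbE_{\eta_\ell}\varphi_\ell - \bbE_{\eta_{\ell-1}}\varphi_{\ell-1}$, the variance of $\hat Y_{L,{\rm Multi}}$ equals $\sum_{\ell=0}^L \cV_\ell/N_\ell$ with $\cV_\ell \leq C h_\ell^\beta$ by the second bound in (A\ref{hyp:D}), while the total cost is at most $C\sum_\ell N_\ell h_\ell^{-\zeta}$ by the third. Minimising this cost over $N_\ell>0$ subject to $\sum_\ell \cV_\ell/N_\ell = \varepsilon^2/2$ via a one-parameter Lagrange multiplier gives the standard optimiser $N_\ell \propto \sqrt{\cV_\ell h_\ell^\zeta} \propto h_\ell^{(\beta+\zeta)/2}$, with the proportionality constant fixed by the constraint, and back-substitution yields a total cost bounded by $C'\varepsilon^{-2} S_L^2$, where $S_L := \sum_{\ell=0}^L h_\ell^{(\beta-\zeta)/2}$. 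Rounding $N_\ell$ up to an integer contributes at most $\cO(L h_L^{-\zeta}) = \cO(\varepsilon^{-\zeta/\alpha}|\log\varepsilon|)$.

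It remains to evaluate $S_L^2$ in the three regimes. If $\beta>\zeta$ the geometric series converges, so $S_L = \cO(1)$ and the cost is $\cO(\varepsilon^{-2})$. If $\beta=\zeta$ then $S_L = L+1 = \cO(|\log\varepsilon|)$ and the cost is $\cO(\varepsilon^{-2}|\log\varepsilon|^2)$. If $\beta<\zeta$ then $S_L$ is dominated by its last term $h_L^{(\beta-\zeta)/2} = \cO(\varepsilon^{-(\zeta-\beta)/(2\alpha)})$, so the cost is $\cO(\varepsilon^{-2-(\zeta-\beta)/\alpha})$. The hypothesis $\max\{\beta,\zeta\}\leq 2\alpha$ is precisely what makes the integer-rounding correction negligible in all three cases: $\zeta\leq 2\alpha$ handles the first two regimes directly, and $\beta\leq 2\alpha$ rearranges to $\zeta/\alpha \leq 2+(\zeta-\beta)/\alpha$, covering the third.

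No step is a genuine obstacle: the Lagrangian optimisation of $\{N_\ell\}$ is a one-line classical computation, and everything else is bookkeeping of geometric sums combined with the choice of $L$. The subtlest point, worth flagging, is that the hypothesis $\max\{\beta,\zeta\}\leq 2\alpha$ really is sharp for absorbing the rounding correction uniformly across all three regimes; the check above verifies this. All constants $C,C'$ depend only on those in (A\ref{hyp:D}) and on $h_0$, $M$.
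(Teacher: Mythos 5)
Your proposal is correct and is precisely the classical MLMC complexity argument (bias--variance split, Lagrangian choice of $N_\ell \propto \sqrt{\cV_\ell/C_\ell}$, three-case evaluation of the geometric sum, and absorption of the integer-rounding cost using $\max\{\beta,\zeta\}\le 2\alpha$). The paper does not supply its own proof of this theorem --- it cites \cite{giles2015multilevel} --- and your argument matches that standard proof as well as the strategy the paper uses for its MLSMC analogue, so there is nothing further to reconcile.
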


\subsection{Multilevel sequential Monte Carlo sampler}
\label{sec:mlsmc}

Now the inverse problem will be considered.
There is a sequence of probability measures $\{\eta_{\ell}\}_{\ell\geq 0}$ on a common measurable 
space $(E,\mathcal{E})$,
and for each $l$ there is a measure
$\gamma_{\ell}:\mathcal{E}\rightarrow\mathbb{R}^+$, 
such that
%
\begin{equation}
\label{eq:target}
\eta_{\ell}(d\lambda) = \frac{\gamma_\ell(d\lambda)}{Z_\ell}
\end{equation}
where the normalizing constant $Z_\ell = \int_E\gamma_\ell(d\lambda)$ 
is unknown.  
The objective is to compute:
$$
\mathbb{E}_{\eta_\infty}[\varphi(\lambda)] := \int_E \varphi(\lambda)\eta_\infty(d\lambda) 
$$
for potentially many measurable $\eta_\infty-$integrable functions $\varphi:E\rightarrow\mathbb{R}$. 

\subsubsection{Notations}

Let $(E,\mathcal{E})$ be a measurable space.
The notation $\testfunction(E)$ denotes the class of bounded and measurable real-valued functions, and the 
supremum norm is written as $\|f\|_{\infty} = \sup_{\lambda\in E}|f(\lambda)|$. 
Consider non-negative operators 
$K : E \times \mathcal{E} \rightarrow \bbR_+$ such that for each $\lambda \in E$ the mapping $A \mapsto K(\lambda, A)$ is a finite non-negative measure on $\mathcal{E}$ and for each $A \in \mathcal{E}$ the function $\lambda \mapsto K(\lambda, A)$ is measurable; the kernel $K$ is Markovian if $K(\lambda, dv)$ is a probability measure for every $\lambda \in E$.
For a finite measure $\mu$ on $(E,\mathcal{E})$,  and a real-valued, measurable $f:E\rightarrow\mathbb{R}$, we define the operations:
\begin{equation*}
    \mu K  : A \mapsto \int K(\lambda, A) \, \mu(d\lambda)\ ;\quad 
    K f :  \lambda \mapsto \int f(v) \, K(\lambda, dv).
\end{equation*}
We also write $\mu(f) = \int f(\lambda) \mu(d\lambda)$. In addition $\|\cdot\|_{r}$, $r\geq 1$, denotes the $L_r-$norm, where the expectation is w.r.t.~the law of the appropriate simulated algorithm.

\subsubsection{Algorithm}

As described in Section \ref{sec:intro}, the context of interest is when a sequence of densities 
$\{\eta_{\ell}\}_{\ell\ge 0}$, as in (\ref{eq:target}), are
associated to an `accuracy' parameter $h_l$, with $h_\ell\rightarrow 0$ 
as $\ell\rightarrow \infty$, such that $\infty>h_0>h_1\cdots>h_{\infty}=0$.  
In practice one cannot treat $h_\infty=0$ and so must consider these distributions with $h_\ell>0$.
The laws with large $h_\ell$ are easy to sample from with low computational cost, but are very different from $\eta_{\infty}$, whereas, those distributions with small $h_\ell$ are
hard to sample with relatively high computational cost, but are closer to $\eta_{\infty}$. 
Thus, we choose a maximum level $L\ge 1$ and we will estimate
$$
\mathbb{E}_{\eta_L}[\varphi(\lambda)] := \int_E \varphi(\lambda)\eta_L(\lambda)d\lambda\ .
$$
By the standard telescoping identity used in MLMC, one has
\begin{align}
\mathbb{E}_{\eta_L}[\varphi(\lambda)] & =  \mathbb{E}_{\eta_0}[\varphi(\lambda)] + \sum_{\ell=1}^{L}\Big\{
\mathbb{E}_{\eta_\ell}[\varphi(\lambda)] - \mathbb{E}_{\eta_{\ell-1}}[\varphi(\lambda)]\Big\} \nonumber \nonumber \\ 
& =\mathbb{E}_{\eta_0}[\varphi(\lambda)] + \sum_{\ell=1}^{L}\mathbb{E}_{\eta_{\ell-1}}\Big[
\Big(\frac{\gamma_\ell(\lambda)Z_{\ell-1}}{\gamma_{\ell-1}(\lambda)Z_\ell} - 1\Big)\varphi(\lambda)\Big]\ .
\label{eq:ml_approx}
\end{align}

Suppose now that one applies an SMC sampler \cite{delm:06b} to obtain 
a collection of samples (particles) that sequentially approximate $\eta_0, \eta_1,\ldots, \eta_L$. 
We consider the case when one initializes the population of particles by sampling  i.i.d.~from $\eta_0$, then at every step  resamples and applies a MCMC kernel to mutate the particles.
We denote by $(\lambda_{0}^{1:N_0},\dots,\lambda_{L-1}^{1:N_{L-1}})$, with $+\infty > N_0\geq N_1\geq \cdots  N_{L-1}\geq 1$, the samples after mutation; one resamples $\lambda_l^{1:N_l}$ according to the weights $G_{\ell}(\lambda_\ell^i) = 
(\gamma_{\ell+1}/\gamma_\ell)(\lambda_\ell^{i})$, for indices $l\in\{0,\dots,L-1\}$.
We will denote by $\{M_\ell\}_{1\leq \ell \leq L-1}$ the sequence of MCMC kernels used at stages $1,\dots,L-1$, such that $\eta_{\ell}M_\ell = \eta_\ell$.
For $\varphi:E\rightarrow\mathbb{R}$, $\ell \in\{1,\dots,L\}$, we have the following estimator 
of $\bbE_{\eta_{\ell-1}}[\varphi(\lambda)]$:
$$
\eta_{\ell-1}^{N_{\ell-1}}(\varphi) = \frac{1}{N_{\ell-1}}\sum_{i=1}^{N_{\ell-1}}\varphi(\lambda_{\ell-1}^i)\ . 
$$
We define
$$
\eta_{\ell-1}^{N_{\ell-1}}(G_{\ell-1}M_\ell(d\lambda_\ell)) = \frac{1}{N_{\ell-1}}\sum_{i=1}^{N_{\ell-1}}G_{\ell-1}(\lambda_{\ell-1}^i) M_\ell(\lambda_{\ell-1}^i,d\lambda_\ell)\ .
$$
The joint probability distribution for the SMC algorithm is 
$$
\prod_{i=1}^{N_0} \eta_0(d\lambda_0^i) \prod_{\ell=1}^{L-1} 
\prod_{i=1}^{N_\ell} \frac{\eta_{\ell-1}^{N_{\ell-1}}(G_{\ell-1}M_\ell(d\lambda_\ell^i))}{\eta_{\ell-1}^{N_{\ell-1}}(G_{\ell-1})}\ .
$$
If one considers one more step in the above procedure, that would deliver samples 
$\{\lambda_L^i\}_{i=1}^{N_L}$, a standard SMC sampler estimate of the quantity of interest in (\ref{eq:ml_approx})
is $\eta_L^{N}(g)$; the earlier samples are discarded. 
Within a multilevel context, a consistent SMC estimate of \eqref{eq:ml_approx}
is
\begin{equation}
\widehat{Y} =
\eta_{0}^{N_0}(\varphi) + \sum_{\ell=1}^{L}\Big\{\frac{\eta_{\ell-1}^{N_{\ell-1}}(\varphi G_{\ell-1})}{\eta_{\ell-1}^{N_{\ell-1}}(G_{\ell-1})} - \eta_{\ell-1}^{N_{\ell-1}}(\varphi)\Big\}\label{eq:smc_est}\ , 
\end{equation}
and this will be proven to be superior than the standard one, under assumptions.

The relevant MSE error decomposition here is:
\begin{equation}
\label{eq:dec}
\mathbb{E}\big[ \{\widehat{Y}-\mathbb{E}_{\eta_\infty}[\varphi(\lambda)] \}^2\big]
\le 2\,\mathbb{E}\big[\{\widehat{Y}-\mathbb{E}_{\eta_L}[\varphi(\lambda)]\}^2\big] + 
2\,\{ \mathbb{E}_{\eta_L}[\varphi(\lambda)] - \mathbb{E}_{\eta_\infty}[\varphi(\lambda)]  \}^2\ . 
\end{equation} 

\subsubsection{Multilevel SMC}\label{sec:complex}

We will now 
restate an analytical result from \cite{ourmlsmc} that controls the error term 
$\mathbb{E}[\{\widehat{Y}-\mathbb{E}_{\eta_L}[\varphi(\lambda)]\}^2]$ in expression \eqref{eq:dec}.
For any $\ell\in\{0,\dots,L\}$ and $\varphi\in \mathcal{B}_b(E)$ we write:
$
\eta_\ell(\varphi) := \int_E \varphi(\lambda)\eta_\ell(\lambda)d\lambda.
$
The following standard assumptions will be made ; see \cite{ourmlsmc, delm:04}.


\begin{hypA}
\label{hyp:A}
There exist $0<\underline{C}<\overline{C}<+\infty$ such that
\begin{eqnarray*}
\sup_{\ell \geq 1} 
\sup_{\lambda\in E} G_\ell (\lambda) & \leq & \overline{C}\ ;\\
\inf_{\ell \geq 1} 
\inf_{\lambda\in E} G_\ell (\lambda) & \geq & \underline{C}\ .
\end{eqnarray*}
\end{hypA}

\begin{hypA}
\label{hyp:B}
There exists a $\rho\in(0,1)$ such that for any $\ell\ge 1$, $(\lambda,v)\in E^2$, $A\in\mathcal{E}$:
$$
\int_A M_\ell(\lambda,d\lambda') \geq \rho \int_A M_\ell(v,d\lambda')\ .
$$
\end{hypA}

Under these assumptions the following Theorem is proven in \cite{ourmlsmc}

\begin{theorem}\label{theo:main_error}
Assume (A\ref{hyp:A}-\ref{hyp:B}). There exist $C<+\infty$ 
such  that for any $\varphi \in\mathcal{B}_b(E)$, with $\|\varphi\|_{\infty}=1$,
\begin{align*}
\mathbb{E}\big[\{\widehat{Y}-\mathbb{E}_{\eta_L}[\varphi(\lambda)]\}^2\big] 
\leq 
C\,\bigg(\frac{1}{N_0} + &\sum_{\ell=1}^{L}\bigg( \tfrac{\cV_l}{N_{l-1}} 
+ 
\tfrac{\cV_\ell^{1/2}}{N_{\ell-1}^{1/2}} \sum_{q=\ell+1}^{L}\tfrac{\cV_q^{1/2}}{N_{q-1}} \bigg)\bigg)\ ,
\end{align*}
where $\cV_\ell := \|\tfrac{Z_{\ell-1}}{Z_{\ell}}G_{\ell-1}-1\|_{\infty}^2$.

\end{theorem}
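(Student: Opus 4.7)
The plan is to decompose the error $\widehat{Y}-\bbE_{\eta_L}[\varphi]$ into a level-$0$ Monte Carlo term plus a sum of $L$ SMC increment errors, then expand its square and bound the diagonal and the cross contributions separately. Using the identity $\eta_\ell(\varphi)=\eta_{\ell-1}(\varphi G_{\ell-1})/\eta_{\ell-1}(G_{\ell-1})$ to match the exact telescoping \eqref{eq:ml_approx} with the estimator \eqref{eq:smc_est}, the total error takes the form $\varepsilon_0+\sum_{\ell=1}^{L}D_\ell$, where $\varepsilon_0=\eta_0^{N_0}(\varphi)-\eta_0(\varphi)$ is an i.i.d.\ Monte Carlo error accounting for the $1/N_0$ summand, and $D_\ell$ is the deviation of the $\ell$-th self-normalized SMC increment from its target.

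The next step is to rewrite $D_\ell$ using the centered potential $\widetilde{G}_{\ell-1}:=(Z_{\ell-1}/Z_\ell)G_{\ell-1}$, which satisfies $\eta_{\ell-1}(\widetilde{G}_{\ell-1})=1$ and $\|\widetilde{G}_{\ell-1}-1\|_\infty^2=\cV_\ell$. After this rescaling $D_\ell$ becomes a sum of SMC empirical fluctuations applied to test functions of the form $\varphi(\widetilde{G}_{\ell-1}-1)$, whose supremum norms are themselves of order $\cV_\ell^{1/2}$. The standard Del~Moral-type $L_2$ bound for SMC samplers, whose constant is uniform in $\ell$ and $L$ thanks to the bounded-potential hypothesis (A\ref{hyp:A}) and the Doeblin-type mixing hypothesis (A\ref{hyp:B}) on $\{M_\ell\}$, then yields $\|D_\ell\|_2 \leq C\,\cV_\ell^{1/2}/N_{\ell-1}^{1/2}$. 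Summing $\bbE[D_\ell^2]$ produces the diagonal contribution $\sum_{\ell}\cV_\ell/N_{\ell-1}$.

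The main obstacle is the cross terms $\bbE[D_\ell D_q]$ for $\ell<q$: a direct Cauchy--Schwarz only delivers $\cV_\ell^{1/2}\cV_q^{1/2}/(N_{\ell-1}N_{q-1})^{1/2}$, which is a factor $N_{q-1}^{1/2}$ weaker than the stated bound. The required refinement is a conditional bias--variance decomposition of the self-normalized ratio at level $q$: conditioning on the particle history up through level $q-1$ and Taylor-expanding the ratio estimator around the population quantity reveals that the conditional mean of $D_q$ is, in $L_2$, of order $\cV_q/N_{q-1}$, with the small factor $\cV_q$ arising again from $\eta_{q-1}(\widetilde{G}_{q-1}-1)=0$. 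Pairing this conditional bias estimate with the $L_2$ control on $D_\ell$ from the previous paragraph yields $|\bbE[D_\ell D_q]|\leq C\,\cV_\ell^{1/2}\cV_q^{1/2}/(N_{\ell-1}^{1/2}N_{q-1})$, and summing over $\ell<q$ reproduces the nested double sum in the statement.

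Throughout, assumptions (A\ref{hyp:A}) and (A\ref{hyp:B}) are used precisely to guarantee that the constants in the underlying SMC $L_p$ estimates, as well as the lower bounds on the empirical normalizers $\eta_{\ell-1}^{N_{\ell-1}}(G_{\ell-1})$ appearing in denominators, are uniform in $L$, which is what prevents the final $C$ from growing with the depth of the hierarchy. Collecting the level-$0$, diagonal and cross contributions yields the claimed bound; since the theorem is restated from \cite{ourmlsmc}, a streamlined alternative is simply to verify these two abstract hypotheses for the nonlocal-elliptic setting and invoke that reference directly.
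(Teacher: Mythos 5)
The paper gives no proof of this theorem at all: it is restated from \cite{ourmlsmc} (``the following Theorem is proven in \cite{ourmlsmc}''), which is precisely the ``streamlined alternative'' you mention in your last sentence. Your sketch correctly reconstructs the argument of that reference and takes the same route the paper relies on --- the level-$0$ plus increment decomposition, the centering via $\widetilde{G}_{\ell-1}=(Z_{\ell-1}/Z_\ell)G_{\ell-1}$ so that $\|\widetilde{G}_{\ell-1}-1\|_\infty^2=\cV_\ell$, and, crucially, the treatment of the cross terms $\bbE[D_\ell D_q]$ through the $\cO(1/N_{q-1})$ conditional bias of the self-normalized increment rather than by plain Cauchy--Schwarz, which is exactly what produces the nested double sum.
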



The following additional assumption will now be made

\begin{hypA}
\label{hyp:C}
There exist $\alpha, \beta, \zeta>0$, 
and a $C>0$ such that 
\begin{equation}\label{eq:mlrates}
\begin{cases}
\cV_\ell & \leq C h_\ell^\beta ;\\              
|(\eta_L - \eta_\infty)(1) | & \leq C h_L^\alpha ;\\
{\rm C} (G_{\ell}) & \leq C h_\ell^{-\zeta},
\end{cases}
\end{equation}
where ${\rm C} (G_{\ell})$ denotes the cost to evaluate $G_{\ell}$.
\end{hypA}

The following Theorem may now be proven

\begin{theorem}\label{theo:mlsmc}
Assume (A\ref{hyp:A}-\ref{hyp:C})
and $\rm{max}\{\beta,\zeta\} \leq 2\alpha$. 
Then for any $\varepsilon>0$, there exist $L,\{N_\ell\}_{\ell=0}^L$
and $C>0$ such that 
\begin{equation}
\mathbb{E}\Big[\Big(\widehat{Y} - \mathbb{E}_{\eta}[\varphi(\lambda)]\Big)^2\Big] \leq C \varepsilon^2,
\label{eq:mlsmcmse}
\end{equation}
for the following cost 
\begin{equation}\label{eq:mlncCosts}
{\rm COST} \leq C 
\begin{cases}
\varepsilon^{-2}, & \text{if} \quad \beta > \zeta, \\ 
\varepsilon^{-2} |\log(\varepsilon)|^2, & \text{if} \quad \beta = \zeta, \\ 
\varepsilon^{-\left( 2 + \frac{\zeta-\beta}{\alpha} \right)}, & \text{if} \quad \beta < \zeta. 
\end{cases}
\end{equation}
\end{theorem}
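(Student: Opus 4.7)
The plan is to combine the MSE decomposition \eqref{eq:dec} with the variance bound from Theorem \ref{theo:main_error} and the rates in (A\ref{hyp:C}), then optimise the per-level sample sizes $\{N_\ell\}_{\ell=0}^L$ and the terminal level $L$ so as to minimise total cost subject to an MSE budget of order $\varepsilon^2$. The skeleton of the argument mirrors the classical MLMC proof (cf.\ Theorem \ref{theo:mlmc} and \cite{giles2015multilevel}), but with the SMC-specific variance bound of Theorem \ref{theo:main_error} playing the role of the independent-sample variance.

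I would first fix $L$ using the bias. Assuming geometric refinement $h_\ell\asymp K^{-\ell}$, the bias clause of (A\ref{hyp:C}) yields $|\mathbb{E}_{\eta_L}[\varphi]-\mathbb{E}_{\eta_\infty}[\varphi]|\leq C h_L^\alpha$, so choosing $L$ so that $h_L^\alpha = \mathcal{O}(\varepsilon)$---equivalently $L = \mathcal{O}(\alpha^{-1}\log(1/\varepsilon))$---makes the squared-bias term in \eqref{eq:dec} of order $\varepsilon^2$, as required.

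Next, for the variance I would substitute $\cV_\ell\leq C h_\ell^\beta$ into Theorem \ref{theo:main_error} and adopt the standard MLMC sample allocation $N_\ell \asymp \kappa\,h_\ell^{(\beta+\zeta)/2}$, which is the choice that balances increment variance against per-sample cost $h_\ell^{-\zeta}$. With this choice the diagonal contribution $\sum_\ell \cV_\ell/N_{\ell-1}$ is of order $\kappa^{-1} S_L$ and the total cost $\sum_\ell N_\ell h_\ell^{-\zeta}$ is of order $\kappa S_L$, where
\begin{equation*}
S_L := \sum_{\ell=0}^{L} h_\ell^{(\beta-\zeta)/2}.
\end{equation*}
Setting $\kappa$ so that $\kappa^{-1} S_L \asymp \varepsilon^2$ gives $\kappa \asymp S_L/\varepsilon^2$ and hence total cost of order $S_L^2/\varepsilon^2$. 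Evaluating $S_L$ in the three regimes---convergent geometric series for $\beta>\zeta$ (so $S_L = \mathcal{O}(1)$); marginal for $\beta=\zeta$ (so $S_L \asymp L \asymp \log(1/\varepsilon)$); dominated by the finest level for $\beta<\zeta$ (so $S_L \asymp h_L^{(\beta-\zeta)/2} \asymp \varepsilon^{(\beta-\zeta)/(2\alpha)}$)---reproduces the three cost bounds in \eqref{eq:mlncCosts}.

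The principal technical obstacle is the non-standard cross term $\sum_\ell \cV_\ell^{1/2} N_{\ell-1}^{-1/2}\sum_{q>\ell}\cV_q^{1/2} N_{q-1}^{-1}$ in Theorem \ref{theo:main_error}: unlike in classical MLMC, it couples different levels and is not made manifestly subdominant by the allocation above. I would control it by Cauchy--Schwarz together with the elementary inequality $2ab\leq a^2+b^2$, reducing it to diagonal contributions of the forms $\cV_\ell/N_{\ell-1}$ and $\cV_q/N_{q-1}^2$, and then verify---using the condition $\max\{\beta,\zeta\}\leq 2\alpha$ together with the chosen allocation---that each such piece is at most of the same order as the leading variance contribution, so that the three-regime cost bound is preserved. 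Assumptions (A\ref{hyp:A}--\ref{hyp:B}) enter only implicitly, through the application of Theorem \ref{theo:main_error} itself.
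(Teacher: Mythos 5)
Your proposal is correct and follows essentially the same route as the paper: the decomposition \eqref{eq:dec} with $h_L^\alpha\eqsim\varepsilon$ fixing $L$, the allocation $N_\ell\propto \varepsilon^{-2}K_L h_\ell^{(\beta+\zeta)/2}$ with $K_L=\sum_\ell h_\ell^{(\beta-\zeta)/2}$ (your $S_L$) giving cost $\varepsilon^{-2}K_L^2$, and the three-regime evaluation of $K_L$. Your treatment of the cross term (Young's inequality plus Cauchy--Schwarz to reduce it to diagonal pieces) is a minor mechanical variant of the paper's direct substitution, which bounds it by $\cO(\varepsilon^{2}\,\varepsilon^{1-\zeta/2\alpha}L^{1/2}K_L^{-1})$ using $(\sum_\ell a_\ell)^2\le L\sum_\ell a_\ell^2$; both arguments conclude in the same way from $\max\{\beta,\zeta\}\le 2\alpha$.
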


\begin{proof}
The MSE can be bounded using \eqref{eq:dec}. 
Following from (A\ref{hyp:C})(ii), the second term requires that $h_L^{\alpha} \eqsim \varepsilon$, and assuming
$h_L = M^{-L}$ for some $M\geq2$, this translates to $L\eqsim \log\varepsilon$.  
As in \cite{ourmlsmc}, the additional error term 
is dealt with by first ignoring it and optimizing
${\rm COST}({\bf N})$, for a given $\cV({\bf N})=\varepsilon^2$, where ${\bf N} := (N_1,\dots, N_L)$. 
This requires that $N_\ell \propto \sqrt{\cV_l/C_l} \eqsim h_\ell^{(\beta+\zeta)/2}$.
The constraint then requires that $N_\ell=\varepsilon^{-2}K_L h_\ell^{(\beta+\zeta)/2}$, 
where $K_L = \sum_{l=1}^{L-1} h_l^{(\beta-\zeta)/2}$, so one has 
$$
{\rm COST}({\bf N}) = \sum_{\ell=0}^L N_\ell C_\ell = \varepsilon^{-2} K_L^{2}.
$$
It was shown in \cite{ourmlsmc} that the result follows, provided $\zeta<2\alpha$.
In fact, re-examining the additional term as in Section 3.3 of \cite{ourmlsmc}, one has
\[
\begin{split}
\sum_{\ell=1}^{L} \tfrac{\cV_\ell^{1/2}}{N_{\ell-1}^{1/2}} \sum_{q=\ell+1}^{L}\tfrac{\cV_q^{1/2}}{N_{q-1}}  
& = \cO(\varepsilon^{2} \varepsilon^{1-\zeta/2\alpha} \sum_{\ell=1}^{L-1} h_\ell^{(\beta-\zeta)/4} K_L^{-3/2}) \\
& = \cO(\varepsilon^{2} \varepsilon^{1-\zeta/2\alpha} L^{1/2} K_L^{-1}), 
\end{split}
\]
where the second line follows from the inequality $(\sum_{\ell=0}^L a_\ell)^2 \leq L \sum_{\ell=0}^L a_\ell^2$ and the definition of $K_L$.
Now substituting $K_L= \cO(1), \cO(L), \cO(\varepsilon^{-(\zeta-\beta)/2\alpha})$ for the 3 cases 
and recalling the assumption $\rm{max}\{\beta,\zeta\} \leq 2\alpha$ gives $\cV({\bf N})=\varepsilon^2$
for the costs in \eqref{eq:mlncCosts}.

\end{proof}



\subsubsection{Verification of assumptions}

Assume a uniform prior $\mu$. 
Following from \eqref{eq:data}, the unnormalized measures will be given by 
\begin{equation}
\gamma_\ell = \exp\{-\Phi(\cG_\ell(\lambda))\},
\label{eq:gamdef}
\end{equation}
where $\Phi(\cG_l(\lambda)) = \frac12|\Sigma^{-1/2} (\cG_l(\lambda) - y) |^2$, and
$$
\cG_\ell(\lambda) := [ \langle g_1, u_\ell(\lambda) \rangle, \dots, \langle g_M, u_\ell(\lambda) \rangle]^\top, 
\quad g_i\in L^2(\Omega), 
$$
and $u_l$ is the solution of the numerical approximation of \eqref{eq:original}.
Notice that these are uniformly bounded, over both $\lambda \in E$ and over $\ell$, following
from Corollary \ref{cor:fininv} for finite $\ell$, and Theorem \ref{thm:inv} in the limit.  

It is shown in \cite{ourmlsmc} Section 4 that 
\begin{equation}
|\Phi(\cG_\ell(\lambda)) - \Phi(\cG_{\ell-1}(\lambda))| \leq C(|\cG_\ell | , |\cG_{\ell-1} | ) |\cG_\ell(\lambda) - \cG_{\ell-1}(\lambda)|. 
\label{eq:phig}
\end{equation}
One proceeds similarly to that paper, and finds that
\begin{equation}
|\cG_\ell(\lambda) - \cG_{\ell-1}(\lambda)| 
\leq C \| u_\ell(\lambda) - u_{\ell-1}(\lambda) \|_V.
\label{eq:gu}
\end{equation}

Note that $G_\ell = \gamma_{\ell+1}/\gamma_\ell$,
so inserting the bound \eqref{eq:phig} into \eqref{eq:gamdef}, 
and observing the boundedness of the $\{\cG_\ell\}$
noted above, one can see that Assumption (A\ref{hyp:A}) holds. 
Now inserting \eqref{eq:phig} and \eqref{eq:gu} into \eqref{eq:gamdef}, 
one can see that in order to establish Assumption (A\ref{hyp:C}), 
it suffices to establish 
rates of convergence 
for $\| u_\ell(\lambda) - u_{\ell-1}(\lambda) \|_V$.  In particular, Assumption (A\ref{hyp:A})
and the C2 inequality (Minkowski plus Young's) provide that 
\[
\begin{split}
\cV_\ell  & = \|\tfrac{Z_{\ell-1}}{Z_{\ell}}G_{\ell-1}-1\|_{\infty}^2 = \left \|\frac{G_{\ell-1}}{\eta_{\ell-1}(G_{\ell-1})}-1\right \|_{\infty}^2 \\
& \leq C ( \|{G_{\ell-1}}-1\|_{\infty}^2 + \|\eta_{\ell-1}(G_{\ell-1}) - 1 \|_\infty^2 )  \leq 2C \|{G_{\ell-1}}-1\|_{\infty}^2 \\
& \leq C' \| u_\ell(\lambda) - u_{\ell-1}(\lambda) \|_V^2.
\end{split}
\]

Therefore, the rate of convergence of $\| u_\ell(\lambda) - u_{\ell-1}(\lambda) \|_V$ 
is the required quantity for both the forward (for Lipschitz $Q$) and the inverse 
multilevel estimation problems.  By the triangle inequality it suffices to consider 
the approximation of the truth $\| u_\ell(\lambda) - u(\lambda) \|_V$, and 
by C{\'e}a's lemma (\cite{ciarlet2002finite}, Thm. 2.4.1),
Assumption \ref{as:kernel} guarantees  the existence of a $C>0$ such that 
\begin{equation}
\| u_\ell(\lambda) - u(\lambda) \|_V \leq C \inf_{v_\ell \in V_\ell} \| v_\ell(\lambda) - u(\lambda) \|_V 
\label{eq:cea}
\end{equation}

Theorem 6.2 of \cite{du2012analysis} provides 
rates of convergence 
of the best approximation above, in the case that $V=H^s$ and the FEM 
triangulation is shape regular and quasi-uniform.  
In particular, their Case 1 corresponds to the case of a singular kernel, i.e. not even integrable, 
and $\cL:H^s_c \rightarrow H^{-s}$, for $s\in (0,1)$, so the solution 
operator is smoothing.
Their Case 2 corresponds to a slightly more regular kernel than ours, 
where in fact $B(x,\cdot) \in L^2$, rather than $L^1$ as given by Assumption \ref{as:kernel}
and consequently $\cL:L^2_c \rightarrow L^2$.
It is shown that in Case 2, if the solution $u\in H^{m+t}$, for polynomial elements of order $m$
and some $t\in[0,1]$, then convergence with respect to the $L^2$ norm is in fact $\cO(h^{m+t})$.  
So, for linear elements, second order convergence can be obtained, 
leading to $\beta=4$ in \eqref{eq:mlrates}.

Recall that \ref{as:kernel} 
ensures well-posedness of the solution from $ L^2 \ni b \mapsto u\in L^2$,
and so discontinuities are allowed.  This is actually a strength of nonlocal models and one impetus for their use.
It is reasonable to expect that if the the nodes of the discontinuous elements match up 
with the discontinuities, i.e. there is a node at each point of discontinuity for $d=1$ or there are nodes
all along a curve/surface of discontinuity for $d>1$, and if the solution is sufficiently smooth in the subdomains, 
then the rates of convergence should match those of the smooth subproblems.  
For example, if there is a point discontinuity such that the domain can be separated at that point into 
$\Omega_1\cup\Omega_2=\Omega$ and one has 
$u|_{\Omega_1} \in H^{m+t}(\Omega_1)$ and $u|_{\Omega_2} \in H^{m+t}(\Omega_2)$, where $u|_{\Omega_i}$ is the
restriction of $u$ to the set $\Omega_i$, then one expects the convergence rate of $m+t$ to be preserved.  
This is postulated and verified numerically in \cite{chen2011continuous}.
It is also illustrated numerically in that work that even with discontinuous elements, 
if there is no node at the discontinuity then the convergence rate reverts to $1/2$, 
so $\beta=1$ in \eqref{eq:mlrates}.


\section{Numerical examples}
\label{sec:num}

The particular nonlocal model which will be of interest here is that in which
the kernel is given by 
\begin{equation}
  B_\lambda(x,x') = f(x,x',\theta) \frac{1}{\delta^2|x - x'|^\alpha}
  {\mathbf{1}}_{\{|x - x'| < \delta\}}, \label{eq:part}
\end{equation}
where $\lambda := (\theta, \alpha, \delta)$, and ${\mathbf{1}}_{A}$ is the
characteristic function which takes the value 1 if $(x,x')\in A$ and zero
otherwise, and $f(x,x')=f(x',x)$. Notice that $\alpha\in (0,d)$ is scalar, but
$(\theta,\delta)$ may be defined on either a finite-dimensional subspace of
function-space, or in principle in the full infinite dimensional space.  This
may be of interest to incorporate spatial dependence of material properties. 

Following \cite{d2014optimal} we consider the
following model. Let $\Omega = [0, 1]$ and $\Gamma = [-\delta, 0)\cup(1, 1 +
\delta]$. For $(x,x')\in \Omega^2$, let $z = (x + x')/2$.  The kernel is
defined by,
\begin{align*}
  f(x, x',\theta) &= \begin{cases}
    0.2 + (z - 0.625)^2   & \text{if } z \in [0, 0.625) \\
    z + \theta            & \text{if } z \in [0.625, 0.75) \\
    14.4 + (z - 0.75) + 2 & \text{if } z \in [0.75, 1)
  \end{cases} \\
  b(x) &= 5
\end{align*}
The following prior is used for the parameters,
\begin{align*}
  \theta & \sim \mathrm{Uniform}(1, 2) \\
  \alpha & \sim \mathrm{Beta}(2, 2) \\
  \delta & \sim \mathrm{Gamma}(1, 1)\text{ truncated on }(0.125,1)
\end{align*}
This is an extension of an example used in \cite{d2014optimal}, which is
identical to this model with $\theta = 1.25$ and $\alpha = 1$ (various values of $\delta$ are used there). 
To solve the equation with FEM, 
a uniform partition on $\Omega$ is used with $h_\ell =
2^{-(k + \ell)}$, $k = 3$. Thus for each level $\ell$, $h < \delta$. 
The same discontinuous Galerkin FEM 
as in \cite{d2014optimal} is applied to solve the system. 

Similar methods to those in \cite{ourmlsmc} are used to estimate the convergence rates
for both the forward and inverse problems. The estimated rates are $\hat\alpha =
2.005$ and $\hat\beta = 4.271$. The rates $\alpha = 2$ and $\beta = 4$ are used
for the simulations.

For the inverse problem, data is generated with $\theta = 2$, $\alpha = 0.5$ and $\delta = 0.2$. 
The observations are $y|u \sim N(m(u),\sigma^2)$, where $\sigma^2=0.01$ and $m(u)=[u(a),u(b)]^\top$.
The quantity of interest for both the forward and inverse problem is 
$u(0.5)$.

The forward problem is solved in the standard multilevel fashion
by generating coupled particles from the prior at each level. 
The main result, the cost vs. MSE of the estimates is shown in
Figure~\ref{fig:mlmc}.  The Bayesian inverse problem is also solved for this
model, and the main result is shown in Figure~\ref{fig:mlsmc}.

\begin{figure}
  \includegraphics[width=\linewidth]{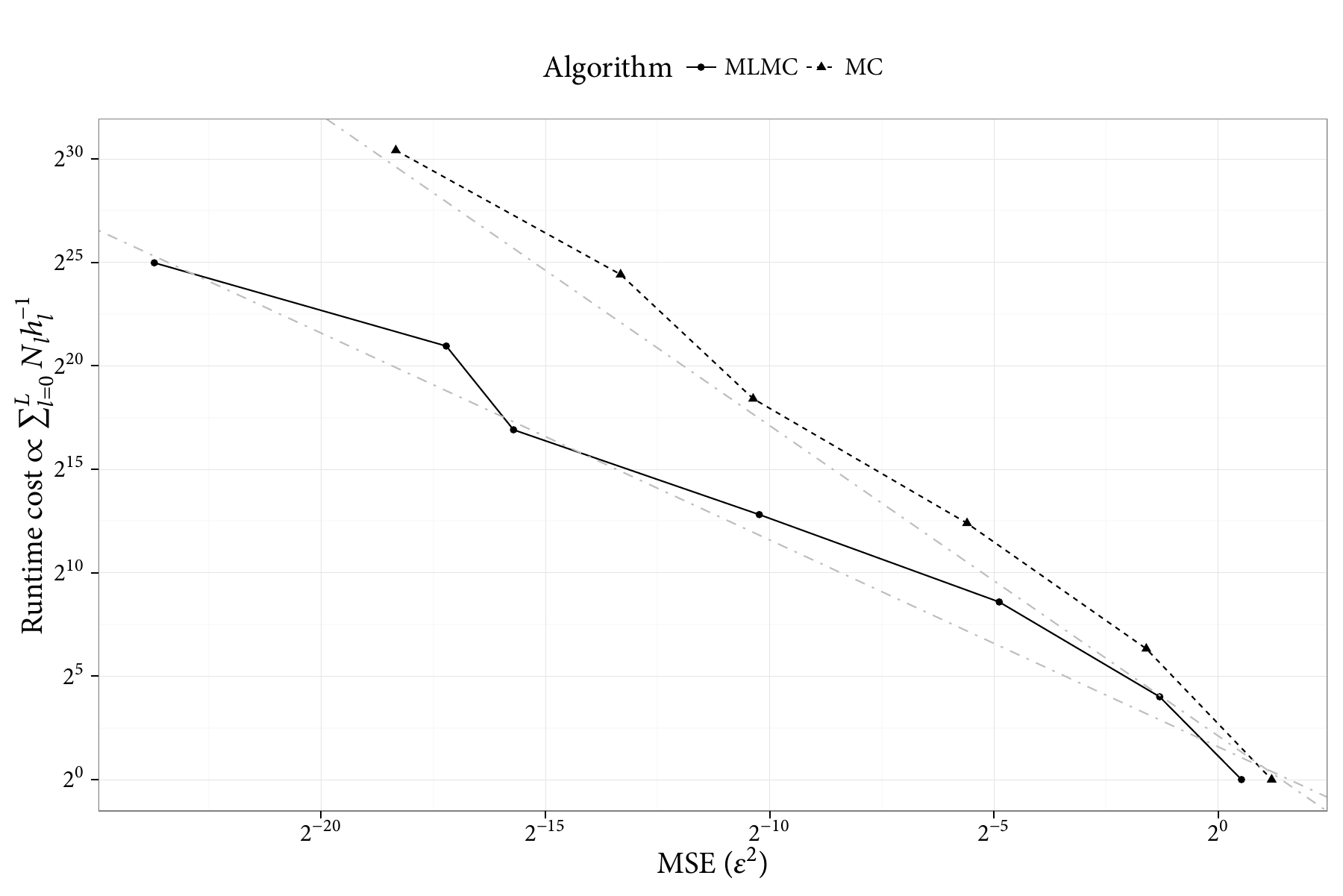}
  \caption{Cost vs. MSE for MLMC}
  \label{fig:mlmc}
\end{figure}

\begin{figure}
  \includegraphics[width=\linewidth]{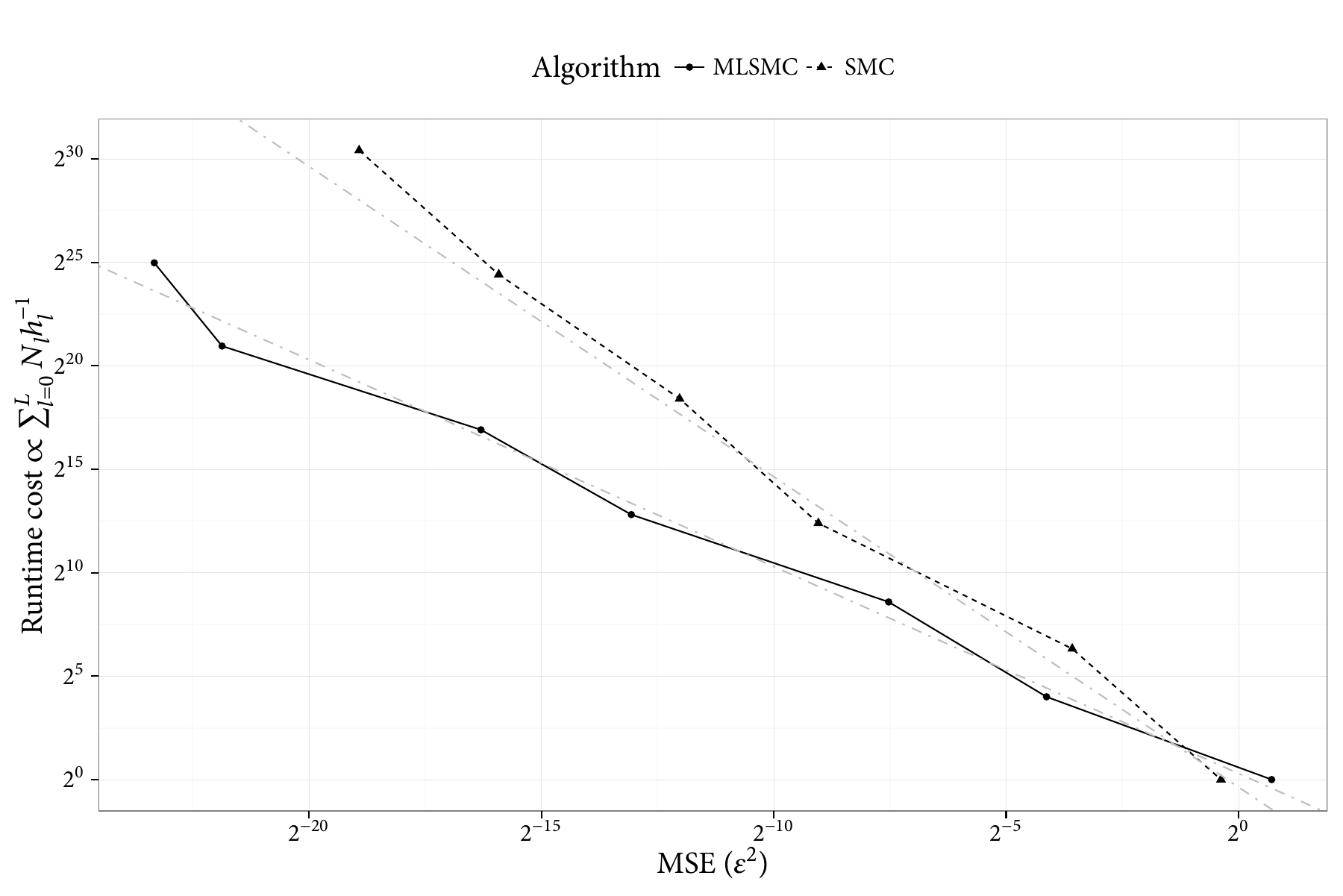}
  \caption{Cost vs. MSE for MLSMC}
  \label{fig:mlsmc}
\end{figure}

\section{Summary}

This is the first systematic treatment of UQ for nonlocal models, to the knowledge of the authors.  
Natural extensions include obtaining rigorous convergence results for piecewise smooth solutions, 
exploring higher-dimensional examples, spatial parameters, time-dependent models, 
and parameters defined on non-compact spaces.

{\bf Acknowledgements.  } KJHL was supported by the DARPA FORMULATE project. 
AJ \& YZ were supported by Ministry of Education AcRF tier 2 grant, R-155-000-161-112.
We express our gratitude to Marta D'Elia, Pablo Seleson, and Max Gunzburger for useful discussions.


\end{document}